\newtheorem{lemma}{Lemma}[section]
\newenvironment{proof}{\textit{Proof. }}
\newcommand{\tr}{\operatorname{tr}}
\newcommand{\specialcell}[2][c]{ \begin{tabular}[#1]{@{}c@{}}#2\end{tabular}}
\newcommand{\bgt}{\begin{itemize}}
\newcommand{\ent}{\end{itemize}}
\newcommand{\op}{\operatorname}
\newcommand{\la}{\label}
\newcommand{\lan}{\langle}
\newcommand{\ran}{\rangle}
\newcommand{\ds}{\displaystyle}
\newcommand{\Tr}{\operatorname{Tr}}
\newcommand{\E}{\op{\mathbb{E}}}
\newcommand{\f}{\frac}
\newcommand{\bbm}{\begin{bmatrix}}
\newcommand{\ebm}{\end{bmatrix}}
\newcommand{\bes}{\begin{equation*}}
\newcommand{\ees}{\end{equation*}}
\newcommand{\be}{\begin{equation}}
\newcommand{\ee}{\end{equation}}
\newcommand{\beqy}{\begin{eqnarray}}
\newcommand{\eeqy}{\end{eqnarray}}
\newcommand{\beq}{\begin{eqnarray*}}
\newcommand{\eeq}{\end{eqnarray*}}
\newcommand{\one}{\mathbbm{1}}
\newcommand{\bpm}{\begin{pmatrix}}
\newcommand{\epm}{\end{pmatrix}}
\newcommand{\tblue}{\textcolor{blue}}
\long\def\symbolfootnote[#1]#2{\begingroup
\def\thefootnote{\fnsymbol{footnote}}\footnote[#1]{#2}\endgroup}
\renewcommand\Re{\operatorname{Re}}
\renewcommand\Im{\operatorname{Im}}
\begin{document}

\setkeys{Gin}{width=0.9\textwidth}
% This is for scaling the size of pictures to 100percent of a column width

\title{Thermalisation of a quantum system from first principles}

\author{Gr\'egoire Ithier}      \affiliation{Department of Physics, Royal Holloway, University of London, United Kingdom}
\author{Florent Benaych-Georges}     \affiliation{MAP5, UMR CNRS 8145 -- Universit\'e Paris Descartes, France}

%\footnote{Current address : Royal Holloway, Department of Physics,
%Egham, Surrey, Tw20 0EX, U.K.}

\pacs{85.25C.p}

% Thermalisation_v3:   changing calculation method: using the complex analysis formalism
%

%%%%%%%%%%%%%%%%%%%%%%%%%%%%%%%%%%%%%%%%%%%%%%%
%    Thermalisation_v2:    format Letter Nature

%%%%%%%%%%%%%%%%%%%%%%%%%%%%%%%%%%%%%%%%%%%%%%%

%%%%%%%%%%%%%%%%%%%%%%%%%%%%%%%%%%%%%%%%%%%%%%%
%
%    Version 17: to be sent to Daniel
%   
%   Main changes from 16 to 17:
%
%  		Removing: 
%  		 - material on the effect of the correlation between eigenvalues at short range removed
%     		 - material on the correlation between any resolvent entry (to be done later)
%		 - extra material on the thermalization factor
%		 - final result extra diag density mat: the dependence of the depolarization on the 
%		  energy of the bath is  neglected
%
%%%%%%%%%%%%%%%%%%%%%%%%%%%%%%%%%%%%%%%%%%%%%%%%

%%%%%%%%%%%%%%%%%%%%%%%%%%%%%%%%%%%%%%%%%%%%%%%

%%%%%%%%%%%%%%%%%%%%%%%%%%%%%%%%%%%%%%%%%%%%%%%
%
%    Version 18: to be sent to Hermann, with suggestions from Daniel
%    			sent to Christian for the english
%   
%   Main changes from 17 to 18:
%
%  		- references included in the intro
%		- more commentaries in section D
%		- in Methods, section on second order freeness is removed 
%
%%%%%%%%%%%%%%%%%%%%%%%%%%%%%%%%%%%%%%%%%%%%%%%%

%%%
%  Changes from v18 to v19
%
%   Phil corrections to the abstract
%
%

%  Comments Brian:
%
% include the three energy scales
% include the discussion on the Anderson localization
% include discussion on the choice of a random interaction
% 

%% v1 to v2   Article to Letter

%%%%%%%%%%%%%%%%%%%%%%%%%%%%%%%%%%%%%%%%%%%%%%%%%%%%
%   Main
%%%%%%%%%%%%%%%%%%%%%%%%%%%%%%%%%%%%%%%%%%%%%%%%%%%%

\maketitle

{\bf
Why is thermalisation a universal phenomenon? How does a quantum system reach thermodynamical equilibrium?
These questions are not new, dating even from the very birth of quantum theory~\cite{vonNeumann},
%Schrodinger}
 and have been 
the subject of a renewed interest over the two last decades (see the review in~\cite{eisert_quantum_2015} and references therein).
Regarding first  the stationary limit of thermalisation, 
progress has been made demonstrating the \textit{typical} behavior of local
equilibrium properties of large closed quantum  
systems\cite{tasaki_quantum_1998,popescu_entanglement_2006,goldstein_canonical_2006,linden_quantum_2009,reimann_foundation_2008,riera_thermalization_2012},
%gemmer_distribution_2003,
and the connection of the hypothesis of eigenstate thermalisation (ETH~\cite{vonNeumann,deutsch_quantum_1991,srednicki_chaos_1994,rigol_thermalization_2008}) with this typicality property~\cite{dalessio_quantum_2015}.
%for a review and the connection with typicality)
%which states that the
%expectation value of a macroscopic observable calculated on a single eigenstate of a close quantum system should be
% indistinguishable from the microcanonical value.
%expectation values of macroscopic observables calculate on a single eigenstate are very close to the expectation values calculated with 
%
%hypothesis that the expectation values of macroscopic observables of a large closed quantum system 
%eigenstate thermalisation (ETH). This latter providing assuming that 
% This sheds some new light on the role of entanglement and quantum information transfer between sub-parties of a large quantum system,  taking place during thermalisation.
 Regarding the dynamics  of thermalisation, which can be considered more generally in the unifying framework of measurement and decoherence~\cite{zurek_decoherence_2003},
 % and measurement~\cite{QNDBraginsky}, 
 models
 specific to certain physical systems have been extensively studied
 using various calculation techniques and various assumptions on the initial state of the largest subpart (the "environment"), the most popular being an environment already in a thermal
 state~\cite{Weiss} (see~\cite{genway_dynamics_2013} for a \textit{non} thermal initial state).
In this Letter, we propose a universal model demonstrating that thermalisation of a small quantum system is an emergent property of the
 unitary evolution under a Schr\"{o}dinger equation of a larger composite system, whose initial state can be \textit{arbitrary}.
We show that the origin of universality lies in the phenomenon of ``measure concentration"~\cite{MR1387624,MR856576},
which provides self-averaging properties for the reduced density matrix characterizing the state of the small subsystem.
Using our framework, we focus on the asymptotic state at long times and consider its stationary properties.
 In typical macroscopic conditions, we recover the canonical state and the Boltzmann distribution 
 well known from statistical thermodynamics.
 %~\cite{Kubo}.
 %LaudauLifschitz}.
 This findings lead us to propose an alternative and more general definition of the canonical partition function which
also allow us to describe non thermal stationary states.
}

%  summary:  298 words

% 662

% The perspectives are to consider the effects of spin, exchange symmetry, and particle non conservation,
%to see how this canonical partition function is modified, 
 %investigate how the details of the interaction and the environment are taken into account in the mesoscopic case, 
%consider other symmetry classes for the interaction Hamiltonian, 
%and finally consider the problem of thermalisation
% in a relativistic framework, using the Dirac equation instead of the Schr\"{o}dinger one. 

Thermalisation is probably the most common phenomenon one can encounter in nature. Its 
universality, i.e. the fact that it does not depend on microscopic details
 but only on a small set of parameters  (like temperature or pressure) defining a 
 \textit{macrostate}, 
and its irreversible character have been known experimentally for a long time.
In typical conditions, non equilibrium dynamics is expected to lead to some stationary state, independent of initial conditions
where macroscopic quantities can be calculated using statistical thermodynamics.
% \cite{Kubo}.
%LaudauLifschitz}.
%However, 
However, despite some very recent progress
\cite{popescu_entanglement_2006,goldstein_canonical_2006,linden_quantum_2009,reimann_foundation_2008} the foundations of
 this statistical framework
% allowing theoretical predictions 
are still relying on a set of assumptions 
%hypothesis more or less justified, 
 where the role of randomness and the associated lack of knowledge, the role of averaging over this randomness and
the supposed link with temporal averages through ergodicity, is not justified in a satisfactory manner (see for instance
 the introductory discussion in~\cite{gemmer_distribution_2003} and references therein). 
With the recent progress in the quantum engineering of mesoscopic systems like ultra cold atomic 
gases~\cite{bloch_quantum_2012} or superconducting circuits~\cite{DevoretSchoelkopf}
 allowing
 %   on long enough time scales
 %
%  a high level of control and isolation, and as a consequence 
  the simulation of (almost) closed quantum systems, 
these questions have undergone a renewed interest: understanding how a \textit{local} thermodynamical equilibrium 
%(and more generally stationary states) 
can emerge from the unitary quantum evolution of a large composite system prepared initially in a pure state
  is becoming more urgent.
In this Letter, we propose and study a new model capturing both the transient and the stationary regimes of the time evolution of an
  arbitrary quantum system in contact with an arbitrary quantum environment. This composite system can be initially prepared in
   an arbitrary state: there is no thermal 
  equilibrium hypothesis for the environment in our model, since our present goal is to set firm ground for quantum
   statistical thermodynamics from first principles.
   Focusing on the state of the small subsystem in the long time limit, we are able to propose
 an alternative and more general definition of the canonical partition function describing this  state. 
%  which provides a rigorous ground for a new kind of ergodicity already envisioned by Wigner and Dyson in the 60's while modeling
%nuclear matter with random matrices. In addition, this phenomenon allows analytical calculations with full generality by averaging over a
 %properly controlled disorder.
We calculate this generalized partition function for a certain class of interaction Hamiltonians, and find a simple interpretation 
of thermodynamical equilibrium involving the ratios between densities of states of the environment and of the composite system.
Then we consider the case 
%of a macroscopic environment 
where a temperature can be defined and recover the well known
% thermodynamical
canonical distribution.
% state and the Boltzmann distribution. 
Our method advances the use of two key ingredients: the phenomenon of ``measure concentration'',
which provides \emph{self-averaging} properties of reduced density matrices, and the calculation of the statistics of the \textit{overlaps} 
between eigenvectors of a ``bare" and a ``dressed" Hamiltonian. 
%For performing this latter calculation, we use some recent results from Random Matrix Theory and ``free'' (non-commutative) probability theory.

%603 words should be reduced to 500

%\begin{flushleft}
%\textbf{Description of the model}
%\end{flushleft}

We consider a system $S$ in contact with an environment $E$  and  denote by $\mathcal{H}_s$, $\mathcal{H}_e$  their respective Hilbert spaces. 
%The Hilbert space $\mathcal{H}_b$ of the bath  is assumed to have a large dimension. 
%We use the notation $\mathcal{H}_b(E_b)$ for the subspace of the bath of energy $E_b$, $\dim\Hc_b(E_b)$ being its dimension, which is linked to the density of states of the bath. 
The composite system $S+E$ is a closed system whose Hilbert space is the tensor product $\mathcal{H}=\mathcal{H}_s \otimes \mathcal{H}_e$ and whose total Hamiltonian $\hat{H}$ is $\hat{H}=\hat{H}_s+\hat{H}_e+\hat{W}$ where $\hat{W}$ is an interaction term.
Eigenvectors of the ``bare'' Hamiltonian $\hat{H}_s+\hat{H}_e$ are written as $|\phi_n\rangle$ and are tensor products of eigenvectors $|\epsilon_s\rangle$ of $\hat{H}_s$ and
eigenvectors $| \epsilon_e\rangle$ of $\hat{H}_e$, with the eigenenergy $\epsilon_n=\epsilon_s+\epsilon_e$.
We write $|\psi_i\rangle$ for the eigenvectors of the total or ``dressed'' Hamiltonian $\hat{H}_s+\hat{H}_e+\hat{W}$ and $\{ \lambda_i \}_i$ the set of associated eigenvalues. The state of $S+E$ is described by a density matrix $\varrho(t)$ obeying the following relation derived from the Schr\"odinger equation: 
%(assuming $\hat{H}_s,\hat{H}_e,\hat{W}$ do not depend on time):
$$ \varrho(t) =\hat{U}_t  \varrho(0) \hat{U}_t^\dagger \quad \text{with} \quad \hat{U}_t=e^{- \f{i}{\hbar} \hat{H} t}.$$
% the evolution operator and   $\varrho(0)$  the initial density operator.
%100 words more
%We consider the initial state $ |\Psi(0)\rangle$ of $S+B$ to be a separable product of pure states:
% $ |\Psi(0)\rangle=\sum_s \alpha_s |\epsilon_s\rangle\otimes |\Phi_b\rangle$.
%$S$ is initially in an arbitrary  coherence quantum superposition as well as the environment.
%The results of this paper are derived assuming three main assumptions:
%\begin{itemize}
%  \item the density of states of the environment is large for non zero energies above the ground state.
 % \item the coupling $w$ between $S$ and $B$ is assumed to be time independent, weak and \textit{random}, i.e.
  %the coupling terms $w_{i,j}$ between any two states $|i\rangle$ and $|j\rangle$ of an ortho
%-normal basis of the composite system are assumed to be independent, identically distributed random
% variables with zero mean and standard deviation $\sigma_{w}$
%\end{itemize}
%\begin{comment}
As $S$ is not a closed system, its state is 
%the right object describing its state is its 
described by the reduced density matrix:
% \cite{Landau1927}:
%,NielsenChuang}: 
$\varrho_s(t) = \Tr_e  \varrho(t),$ where $\Tr_e$ denotes the partial trace over the degrees of freedom of the environment.
 Indeed  $\varrho_s$ gives the correct statistics of local measurement outcomes on $S$ alone.
 % (see the interesting discussion on p.107 in \cite{} 2010 edition). 
The initial total density matrix $\varrho(0)$  can be decomposed as a linear combination of
% $|v_i\rangle \langle v_i' |$, that we can  rewrite more simply like a combination of 
$|\phi_m\ran \lan \phi_p |$,  implying that $\varrho(t)$ is also a linear combination of 
the density matrices $\hat{U}_t |\phi_m \rangle \langle \phi_p | \hat{U}^\dagger_t $. To calculate the partial trace over the
 environment we can for instance consider the matrix elements of 
 $\hat{U}_t |\phi_m \rangle \langle \phi_p | \hat{U}^\dagger_t $ in the bare eigenbasis $\{ | \phi_1 \ran,.., | \phi_n \ran \}$. By
expanding 
%which are
%\be
%\label{MatElemTotDensMat}
%\lan \phi_n| \varrho_{m,p}(t) |\phi_q\rangle=\lan \phi_n | \hat{U}_t  |\phi_m \rangle \langle \phi_p |    \hat{U}^\dagger_t |
%\phi_q\ran
%\ee
the evolution operator $\hat{U}_t$ over the dressed eigenbasis $\{|\psi_1\ran,...,|\psi_n\ran\}$:
$\hat{U}_t=\sum_{i} e^{-\frac{i}{\hbar } \lambda_it} | \psi_i \rangle \langle \psi_i|,$
these matrix elements $\lan \phi_n | \hat{U}_t    |\phi_m \rangle  \langle \phi_p |  \hat{U}_t^\dagger |\phi_q \ran
$ can be re-written as:
\begin{eqnarray}
\label{FourierOverlaps}
%\lan \phi_n| \varrho_{m,p}(t) |\phi_q\rangle=
  \nonumber \\  \sum_{i,j}
 e^{-\f{i}{\hbar}(\lambda_i-\lambda_j) t}
  \lan \phi_n  | \psi_i \ran \lan \psi_i | \phi_m\ran 
 \lan \phi_p | \psi_j  \ran  \lan  \psi_j | \phi_q\ran
\end{eqnarray}
and are thus related to the $2-$dimensional Fourier transform of products of four ``overlaps'' $\lan \phi_n | \psi_i \ran$ between the eigenvectors of the bare and dressed Hamiltonians.
% $ \lan \phi_n  | \psi_i \ran \lan \psi_i | \phi_m\ran  \lan \phi_p |\psi_j  \ran  \lan \psi_j| \phi_q \ran$. 
In this article, we focus on the long time limit which is the constant given by the case $i=j$ in the summation in Eq.~\eqref{FourierOverlaps}:
\begin{equation}
\label{LongTimeLimit}
%\lim_{t \to \infty} 
%\lan \phi_n| \varrho_{m,p}(t) |\phi_q\rangle
\small
\lan \phi_n | \hat{U}_t    |\phi_m \rangle  \langle \phi_p |  \hat{U}_t^\dagger |\phi_q \ran
\xrightarrow[t \to \infty]{}  \sum_{i}   \lan \phi_n  | \psi_i \ran \lan \psi_i | \phi_m\ran 
 \lan \phi_p | \psi_i  \ran  \lan  \psi_i | \phi_q\ran.
\end{equation}
We will focus on the transient regime (given by the summation over the indexes $i \neq j$ in Eq.~\eqref{FourierOverlaps}) and show that it is damped (under the same hypothesis assumed hereafter) in a further publication.
%\cite{IthierDecoherence2015, IthierOverlaps2015}.
 To calculate the expression in Eq.~\eqref{LongTimeLimit}, one needs an analytical formula for the overlap $\lan \psi_i | \phi_n \ran$ which is typically a quantity accessible in a perturbative framework (i.e. when 
the typical coupling strength is much smaller than the minimum level spacing). In this paper, we use an original method for calculating these quantities which extends in a \textit{statistical} sense the perturbative framework   to \textit{any} interaction strength.

The core idea of the method relies on the hypothesis assumed for the interaction Hamiltonian:
%between $S$ and the environment: 
we introduce \textit{deliberately} some randomness in $\hat{W}$ which allows to perform calculations with full generality, knowing that this randomness actually \textit{will not matter} in the large dimensionality limit ($\dim \mathcal{H} Ê\to \infty$) and we will explain why in the sequel.
%the final result will not depend on the microscopic structure of the interaction, 
We proceed as follows. First, we consider the set of Hamiltonians $\hat{W}$ 
whose density of states $\tilde{\rho}_W$ is a given function of energy: $H_{\tilde{\rho}_w}$ which establishes a macroscopic
constraint on the interaction.
% it defines the distribution of its eigenvalues,
%which gives the average interaction with the mean, the dispersion of the interaction with the second moment, etc... 
%We call this set of possible interaction Hamiltonians: .
We will assume $\tilde{\rho}_W$ is the only a priori knowledge on the interaction we have, %and 
%$\tilde{\rho}_W(\epsilon)$ summarizes the typical properties of $\hat{W}$ and 
we know nothing about the microscopic structure of its eigenspaces and this fact provides the origin of randomness
 (see Methods section).
 % Section for a short introduction on Random Matrices).
%it is random, meaning that we equip the set $H_{\tilde{\rho}}$ with a probability measure in order to quantify this randomness. 
Then we note that each matrix element of 
%the reduced density matrix 
$\Tr_e \varrho(t)$ is a function of
% the particulars of $\hat{W}$, or more precisely, it is a function of 
each matrix element $W_{n,m}=\lan \phi_n |\hat{W} | \phi_m \ran$.
We have a function defined on a high dimensional input space, the set $H_{\rho_W}$, and with complex values. 
In addition, the partial trace is balancing the dependence of 
$\Tr_e \varrho(t)$ on all the $W_{n,m}$.  As a consequence, the reduced density matrix will exhibit a 
phenomenon known as the ``concentration of measure"\cite{MR1387624,MR856576}, 
%\cite{MR1387624}.
%The main effect of this phenomenon 
 which can be described informally as follows: a numerical function that depends in a regular  way on many
   random independent variables, but not too much on any of them, is \emph{essentially constant} and equal to its mean value almost
    everywhere (see the Methods section for a full justification). The mesoscopic fluctuations of this function are squeezed down as the dimension of its input space goes to infinity.
%A numerical example of the use of L\'evy's lemma is given in the Methods section to illustrate how
We argue that this phenomenon is at the core of the universality of the thermalisation process: it is responsible for a self-averaging property of the reduced density matrix of $S$, considered as a function of the 
interaction Hamiltonian, meaning that the reduced density matrix of $S$ has a ``typical'' behavior for almost all
interaction Hamiltonians (with given spectrum).
%%%%%%%%%%%%%%%%%%%%%%%%%%%%%%%%%%%%%%%%%%%%%%%%%%%%%%
%         Comparison with Monte Carlo method
%
It is interesting to compare  this phenomenon, which is a sort of extension of Central Limit Theorems to ``reasonable'' functions, 
to the Monte Carlo method for estimating the integral of a function over some subspace.
A good estimate of this integral is a discrete average of the function sampled randomly over the subspace.
The measure concentration provides a path along the opposite direction: one is interested in the value of a function at 
a \textit{single} point of the subspace and in \textit{most cases}, a very good estimate of this value is the average of the function 
over the subspace. 
As a consequence,  we can compute an analytical value of each terms involving 
$\hat{W}$ simply by averaging:
$\varrho_s(t)= \Tr_e( \varrho(t) ) \approx \mathbb{E} [\Tr_e (\varrho(t))] = \Tr_e \left( \E[\varrho(t)]\right)$, where
% the notation 
$\mathbb{E}$ is the average over the set of interaction Hamiltonians $H_{\rho_W}$ (see the Methods section for a full justification of this averaging procedure).
 %calculating the average of the matrix elements of $ \hat{U}_t    |\phi_m \rangle  \langle \phi_p |  \hat{U}_t^\dagger $ is sufficient for
 % our purpose.
% where $\mathbb{E} $ means average over the ensembles of Hamiltonians $H_{\rho_E,d_E}$
% for each energy involved.
We are led to consider the average of Eq.~\eqref{LongTimeLimit} which involves the fourth order moments of the overlaps:
$\E[ \lan \phi_n  | \psi_i \ran \lan \psi_i | \phi_m\ran 
 \lan \phi_p | \psi_i \ran \lan \psi_i  |\phi_q\ran]$.
% which is  coefficients.
% between the eigenvectors of $\hat{H}_0$ and 
 %eigenvectors of  $\hat{H}_0+\hat{W}$.

To calculate these moments, 
%sum of Eq.\eqref{LongTimeLimit},
 we notice the connection between the overlaps and the matrix elements of the resolvent operator $G_H(z)=(H-z \one)^{-1}$ in the eigenbasis 
 of $H_0$: $G_{n,m}(z)=\lan \phi_n | G(z) |\phi_m \ran$, in other words the Green functions.
 Since $G_{n,m}(z)=\sum_i \lan \phi_n | \psi_i \ran \lan \psi_i | \phi_m \ran /(\lambda_i -z)$,
 the product $\lan \phi_n | \psi_i \ran \lan \psi_i | \phi_m \ran$ is the residue of $G_{n,m}(z)$ at the pole $\lambda_i$ (up to
  a factor $2i\pi$). Using an integral representation of the sum in Eq.\eqref{FourierOverlaps} involving these Green functions,
  we can demonstrate that the only non zero cases are when $(n=q \text{ and } m=p)$, which governs the 
  dynamics of the diagonal terms of $\varrho(t)$, or
  when ($n=m$ and $p=q$), which governs the dynamics of the extra diagonal terms of $\varrho(t)$ (see Supp. Info.).
  We will focus in the following on the first case and explain in the Supp. Info. why the stationary regime of the extra diagonal terms can be disregarded (meaning that the quantum coherence is lost in the long time limit as can be expected).
  % or ($n=m$ and $p=q$)  
%Using an integral representation of Eq~\eqref{LongTimeLimit}, we establish the connection
%between the fourth order statistics of the overlaps and the second order statistics of the 
%To calculate
%these moments and 
%the sum of Eq~\eqref{LongTimeLimit}
% of the overlaps $\E[ \lan \phi_n  | \psi_i \ran \lan \psi_i | \phi_m\ran  \lan \phi_p | \psi_i \ran \lan \psi_i  |\phi_q\ran]$ 
%we use an integra
%their relation with  the second order statistics
 %Indeed, the product of two overlaps $\lan \phi_n | \psi_i \ran \lan \psi_i | \phi_m \ran$ is (up to a factor $i2\pi$) the complex residue of 
% the meromorphic function $z \rightarrow G_{n,m}(z)$ at the pole $\lambda_i$.
We have now to consider the average of the \textit{diagonal} sum of product of residues
 $\sum_i \E[ |\lan \phi_n | \psi_i \ran |^2  |\lan \phi_m | \psi_i\ran |^2 ]$
%which provides an average transition probability between state $|\phi_m \ran \lan \phi_m |$ at $t=0$
%and state $|\phi_n \ran \lan \phi_n | $ at time $t \to \infty$.
whose leading order is provided by the sum over the product of second order moments 
$ \E[ |\lan \phi_n | \psi_i \ran |^2 ] \E[ |\lan \phi_m | \psi_i \ran |^2 ]$. 
%(see Methods section for a justification).
% (we postpone to a further publicat).
Using ``free'' probability tools \cite{VoiculescuFreeConvol,BianeFreeProba}, and
the \textit{subordination} property of the average resolvent $\E[G_H(z)]$ established by Kargin in \cite{KarginSubordination},
%in \cite{IthierDecoherence2015,*IthierOverlaps2015}
  these second order moments can be calculated easily for an interaction with a given density of states and
%   in generic position with respect to the bare Hamiltonian (i.e. the
 isotropically distributed  eigenvectors (see Methods section).
 % of $\hat{W}$ are isotropically distributed in the eigenbasis of $\hat{H}_0$).
In the limit of large dimension ($\dim \mathcal{H} \to \infty$) and provided the coupling strength $\sigma_w=\sqrt{\tr(W^2)}$ 
(with "$\tr$" the normalized trace)
is much larger than the typical level spacing $1/\tilde{\rho}(\epsilon_n)$ around the energy $\epsilon_n$ (where $\tilde{\rho}$ is the density of states of $\hat{H}$), one has
\begin{eqnarray}
 \E[ |\lan \phi_n | \psi_i \ran |^2 ]
 \approx  \f{\sigma_w^2}{ \dim \mathcal{H}}  \f{1}{(\epsilon_n-\bar{\lambda}_i)^2+ \pi^2 \sigma_w^4 \rho(\epsilon_n)^2 }.
\end{eqnarray}
Note that this last result is identical to the one obtained in~\cite{allez_eigenvectors_2013} for the Gaussian orthogonal matrix ensemble.

These second order moment $\E[ |\lan \phi_m | \psi_i \ran |^2 ]$ defines a local density of states of $|\phi_m\ran$ in the eigenbasis $\{|\psi_1 \ran, ..., |\psi_N\ran\}$: it quantifies how much a bare eigenvector 
%of the bare Hamiltonian 
is \textit{delocalized} or hybridized in the dressed eigenbasis.
%of the dressed Hamiltonian. 
A delocalization ``length'' can then be defined
using the Full Width at Half Maximum (FWHM) of this local density of states and is given by $ \pi \sigma_w^2 \tilde{\rho}(\epsilon_n)/\dim \mathcal{H}$.
%One sees from Eq. that this delocalization profile has a priori a Lorentzian shape.
%Using this weak correlation approximation and the expression for the second order moments, the sum over
%the eigenvalues $\lambda_i$ can be computed:
Then assuming that the scale of variation of the density of states of
 $\hat{H}$ is much larger than this width, and using  a continuous approximation
($\sum_i \approx \int \tilde{\rho}(\lambda) d \lambda $) we  get for Eq.~\eqref{LongTimeLimit}
the following expression:
\begin{eqnarray}
\sum_i \E[ |\lan \phi_n | \psi_i \ran |^2  |\lan \phi_m | \psi_i\ran |^2 ]
=\bar{p}_{m \to n}  \qquad \qquad  \nonumber \\
\approx \f{1 }{\dim \mathcal{H} } 
  \f{2 \sigma_w^2 }{(\epsilon_n-\epsilon_m)^2+4 \pi^2\sigma_w^4 \rho(\epsilon_n)^2} 
  % O\left(\frac{1}{\dim^2 \mathcal{H}}\right).
  \label{SumI2} 
\end{eqnarray}
where $\bar{p}_{m \to n}$ defines an \textit{average} transition probability from an initial state 
$|\phi_m \ran \lan \phi_m |$ at $t=0$ to a final state $|\phi_n \ran \lan \phi_n |$ (at $t \to \infty$) which has 
universal properties: it only depends on 
 the density of states of the total Hamiltonian $\hat{H}$ and the strength of the interaction.
 % (in the case of a gaussian interaction, the only non zero cumulant is the second one: $\sigma_w$).
  This curve has a remarkable Lorentzian shape 
which is in sharp contrast with the micro canonical hypothesis of equiprobability distribution of the accessible states.  On Fig.\ref{TProb}
 we consider numerical simulations and find a very satisfactory agreement 
with this formula. 
%Note that the error $O(1/\dim^2 \mathcal{H})$ in Eq.~\eqref{SumI} is  calculated using the integral formula mentioned before.

% 491

% 2391

\begin{figure}
\includegraphics[width=8cm]{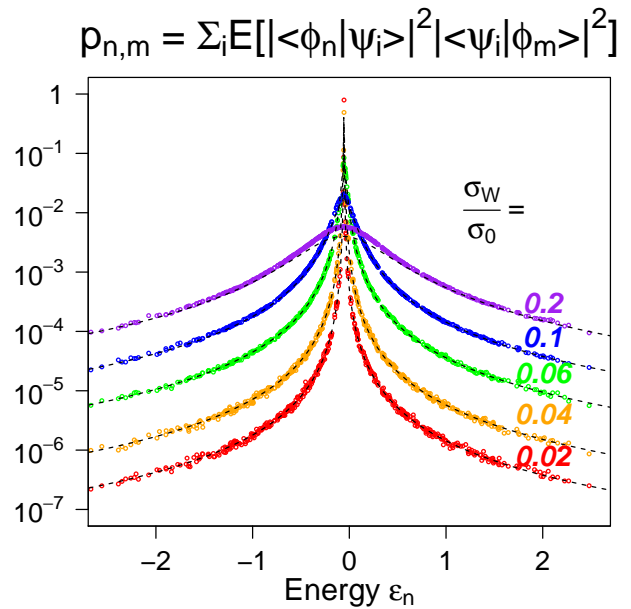}
%[width=0.8*\textwidth]
\caption{ \textbf{Average transition probability $\bar{p}_{m \to n}$ from state $|\phi_m \ran \lan \phi_m |$ at $t=0$ to state $|\phi_n \ran \lan \phi_n |$ at $t\to \infty$.} Comparison between numerical simulations and theoretical predictions. The total Hilbert space size has dimension $\dim \mathcal{H}=500$, and
the density of states of the ``bare'' Hamiltonian $\hat{H}_0$ is assumed to be gaussian distributed with a standard deviation $\sigma_0=1$ and zero mean. The interaction Hamiltonian $\hat{W}$ is an hermitian matrix such that 
$\{\Re(W_{i,j})\}_{i>j}$ and $\{\Im(W_{i,j}) \}_{i>j}$ are independent identically distributed gaussians with standard deviation $\sigma_w/\sqrt{2\dim \mathcal{H}}$ (which provides $\Tr(W^2)/\dim \mathcal{H} = \sigma_w^2$) and zero means.
We plot $\bar{p}_{m \to n}= \sum_i \E[|\lan \phi_n | \psi_i  \ran|^2 | \lan \phi_m | \psi_i \ran|^2]$ for a fixed $m=250$ (middle of the spectrum, $\epsilon_m\approx 0$), as a function of $\epsilon_n$  and for different $S-E$ interaction strengths. This transition matrix $\bar{p}_{m,n}$ is related to how much the eigenvector $|\phi_m \ran$ of $\hat{H}_0$ is \textit{delocalized} in the eigenbasis  of the dressed Hamiltonian because  of the interaction term $\hat{W}$.
%independent identically distributed entries from a gaussian with standard deviation $\sigma=0.01$.
The theoretical prediction is provided by Eq.~\eqref{SumI2} and plotted in dashed line. 
When tracing out the degrees of freedom of the environment, this transition probability will ``sample'' the density of states of the environment at the energy $\epsilon_e$ such that $ \epsilon_e+\epsilon_s=\epsilon_m=\epsilon_{e'}+\epsilon_{s'}$.
}
\label{TProb}
\end{figure}

% 150 words

Finally, to perform the partial trace operation and get the reduced density 
matrix of $S$, we recall the final state $|\phi_n \rangle= |\epsilon_{s'} \rangle |\epsilon_{e'} \rangle$
and  the initial state $| \phi_m \rangle = | \epsilon_s \rangle |\epsilon_e \rangle$ 
and sum Eq.~\eqref{SumI2} over $\epsilon_{e'}$ using a continuous approximation: 
$\Tr_e =\sum_{\epsilon_{e'}} \approx \int d\epsilon_{e'}  \; \tilde{\rho}_e(\epsilon_{e'})$.  
Assuming that the scale of variation of the density of states of the environment $\tilde{\rho}_e(\epsilon)$ is much larger than 
the width of $\bar{p}_{m \to n}$
% $2 \pi \sigma_w^2 \tilde{\rho}(\epsilon_n)/\dim \mathcal{H}$ 
%(the FWHM of the average transition probability) 
 then the Lorentzian
from Eq.~\eqref{SumI2} is "sampling" $\rho_e(\epsilon)$ at $\epsilon_{e'}=\epsilon_s+\epsilon_e-\epsilon_{s'}$,  which 
provides the main result of this paper: for an initial state 
$\varrho(0)=|\epsilon_s\ran \lan \epsilon_s | \otimes |\epsilon_e \ran \lan \epsilon_e |$,
the long time stationary state has the following distribution
\begin{equation}
\label{ProbaS2}
p_{\epsilon_{s'}}= \lim_{t \to \infty} \langle \epsilon_{s'} | \varrho_s(t) | \epsilon_{s'} \rangle \approx 
\frac{\tilde{\rho}_e (\epsilon_s+\epsilon_e-\epsilon_{s'})}{\tilde{\rho}(\epsilon_s+\epsilon_e)}.
\end{equation}
%Eq.~\eqref{ProbaS2}
This ratio provides a very simple interpretation of the thermodynamical equilibrium described on Fig.~(\ref{Fig1}).
\begin{figure}[!t]
  % Requires \usepackage{graphicx}
  \includegraphics[width=8cm]{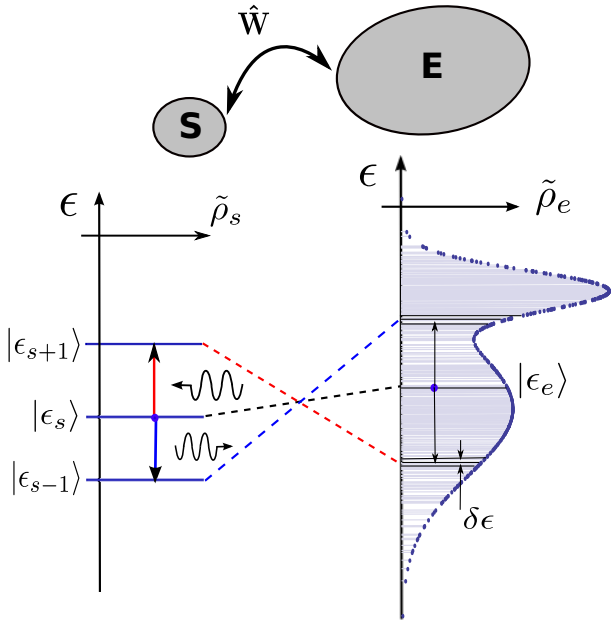}
  \\
  \caption{ \textbf{Summary of the model and its main result on thermalisation}. 
We consider a system $S$  interacting with an environment $E$, both
characterized by the density of states of their respective Hamiltonians ($\tilde{\rho}_s$ for $\hat{H}_s$ and 
$\tilde{\rho}_e$ for $\hat{H}_e$). In the limit of large dimensionality, our model 
%shows that the microscopic details of the interaction and the bath
%do not matter to characterize fully both the transient evolution of $S$ or decoherence and the stationary regime.
%Regarding the stationary regime of $S$, we
 justifies the following interpretation of thermodynamical equilibrium. 
Considering for instance an initial state $ |\Psi_0\ran=|\epsilon_s\ran \otimes  |\epsilon_e\ran$, which is an eigenstate of the "bare" Hamiltonian $\hat{H}_0=\hat{H}_s+\hat{H}_e$, let us ask a first question: 
%what is the typical number of states $|\Psi_0\ran$ is coupled to?
on average, how many other eigenstates of $\hat{H}_0$ are coupled to $|\Psi_0\ran$ because of the interaction $\hat{W}$? 
The answer is $ \tilde{ \rho}(\epsilon_e+\epsilon_s) \delta \epsilon$ where $\tilde{\rho}$ is the density of states of the \textit{dressed} Hamiltonian $\hat{H}=\hat{H}_0+\hat{W}$ and $\delta \epsilon$ is related to the 
typical width of the \textit{local} density of states of $|\Psi_0\ran$ in the
 eigenbasis $|\psi_i\ran$ of $\hat{H}$: $\E[|\lan \psi_i |\Psi_0\ran|^2]$ (where $\E$ stands for an expectation over the ensemble of interaction Hamiltonians considered).
 %see Sec.\ref{SecondStatOverlaps} in the Supp. Mat. for more details).
 This local density of states describes quantitatively how an eigenvector 
 of $\hat{H}_0$ is delocalized in the eigenbasis of $\hat{H}$.
Then let us ask a second question: over this ensemble of states coupled to $|\Psi_0\ran$, \textit{some} of them are such that $S$ is in the state $|\epsilon_{s'}\ran$, 
%(have the particular form: $|\epsilon_{s'}\ran \otimes |\epsilon_{b'}\ran$ (with $\epsilon_{s'}$ a fixed value),
 how many such states are there? We show that the answer  is $ \tilde{ \rho}_e(\epsilon_b+\epsilon_s-\epsilon_{s'}) \delta \epsilon$.
Finally, in the $t \to \infty$ limit, we show that the probability for $S$ to be in a state $|\epsilon_{s'}\ran$
is the ratio of the later number by the first one:
$p_{\epsilon_{s'}}= \tilde{\rho}_e(\epsilon_e+\epsilon_s-\epsilon_{s'})/\tilde{\rho}(\epsilon_e+\epsilon_s)$, 
which is precisely the result expected from an a priori equal probability of states hypothesis.
Under typical conditions verified by a macroscopic environment, this ratio simplifies to the 
Boltzmann distribution $e^{-\epsilon_{s'}/k_b T}/Z$.
%The spectrum of $S$ is assumed to be non degenerate. Because the interaction $\hat{W}$ is weak, each subspace $\Hct(E)$ of total energy $E$ can be considered as stable under the application of $\hat{W}_I(t)$. $\hat{W}$ is thus diagonal by block in the interaction representation.  
}
  \label{Fig1}
\end{figure}
On the numerator the density of states of the environment is assumed to be known,
 and on the denominator the density of states $\tilde{\rho}(\epsilon)$ of the total Hamiltonian $\hat{H}$ 
 depends in a non trivial way on the density of states of the bare Hamiltonian
  $\tilde{\rho}_0(\epsilon) = [ \tilde{\rho}_s * \tilde{\rho}_e ](\epsilon)=
   \sum_{\epsilon_s }\tilde{\rho}_e(\epsilon-\epsilon_s)$,
%  ($\tilde{\rho}_s(\epsilon)= \sum_s \delta(\epsilon-\epsilon_s)$)
 and the density of states of W: $\tilde{\rho}_w$, it is their \textit{free convolution}\cite{VoiculescuFreeConvol,BianeFreeProba}
    $\tilde{\rho}(\epsilon)=\tilde{\rho}_w(\epsilon) \boxplus \tilde{\rho}_0(\epsilon) $ (see Methods section)
which simplifies to $\rho_0$ in the case of weak 
 coupling ($\sigma_w \ll \sigma_0$: the interaction energy is disregarded),
 meaning that the dependence of 
   Eq.~\eqref{ProbaS2}  on the interaction disapear as can be expected.
 Finally, defining the canonical temperature by $\beta=\frac{1}{k_B T}= \frac{d \ln(\rho_e(\epsilon))}{d \epsilon} $,
 and assuming that $\rho_e$ scales exponentially with energy (see \cite{ericson_statistical_1960} for the case of many body interacting quantum systems where this fact relies on the combinatorics of single particle excitations)
  the ratio of densities of states simplifies to the well known canonical distribution:
$$\f{\tilde{\rho}_e(\epsilon_e+\epsilon_s-\epsilon_{s'})}{\tilde{\rho}(\epsilon_e+\epsilon_s)}
\approx \f{\tilde{\rho}_e(\epsilon_e+\epsilon_s-\epsilon_{s'})}{\sum_{\epsilon_{s''} }\tilde{\rho}_e(\epsilon_e+\epsilon_s-
\epsilon_{s''})} \approx \f{e^{-\beta \epsilon_{s'}}}{Z_\beta} $$
with the partition function 
$ Z_\beta=\sum_{\epsilon_s} e^{-\beta \epsilon_s} . $
 All the above results are valid for any initial state of the form
 $|\epsilon_s \ran \lan \epsilon_s | \otimes |\epsilon_e \ran \lan \epsilon_e| $ and can be extended by linearity
  to any initial state, \textit{pure or not}: the initial  extra diagonal terms
(like for instance $|\phi_m \ran \lan \phi_p |$) will not contribute to the final state of $S$, only the initial
 terms $|\phi_m \ran \lan \phi_m|$ contribute. The state of S at long times  will be the weighted average of
  Eq.~\eqref{ProbaS2} by the initial energy distribution of the composite system. 
This leads us to propose to redefine the canonical partition function using the statistics of the overlap coefficients:
 in the long time stationary regime, the probability for the system $S$ to be in a state of energy $\epsilon_s$ is
 $$ p_{\epsilon_s} = \sum_{\epsilon_e,i} \E[ \; \lan \psi_i | \varrho(0) | \psi_i \ran \; 
  |\lan \epsilon_e | \lan \epsilon_s | \psi_i \ran|^2   ] . $$
 %where $\varrho(0)$ is the initial density matrix of $S+E$. 
 %which might be pure or not, 
% and the  $|\psi_i\ran$ are the eigenvectors of the dressed Hamiltonian $\hat{H}$. 
  %Note that the dependence on the initial state disappears when summing over $i$ and $\epsilon_e$.
 This formula only requires the phenomenon of measure concentration to take
 place and does not require the existence of a temperature.
 % i.e. the density of states of the environment does not have to scale exponentially with energy over the typical energy scale probed by energy exchanges with the small system.
% provided the FWHM of $\bar{p}_{m \to n}$ is much smaller than the typical scale of variation of $\tilde{\rho}_e$
%  then Eq.\ref{SumI} is valid whatever the shape of the density of states of the environment (non monotonous or with a gap for instance,
% in theses cases a temperature \textit{cannot} be defined). 
  %The only hypothesis required is that
 %   $\dim \mathcal{H} \to \infty$ to ensure that the phenomenon of measure concentration takes place.
    % and that the averaging process is performed over an ensemble to which the interaction belongs and is a \textit{typical element}. 

% 2425 words for the main part

To summarize, 
%thermalisation of a quantum system
%can be derived form first principles.
 a large composite system initially prepared in an arbitrary state and evolving 
unitarily can \textit{locally} converge to a stationary state for a large class of  interaction Hamiltonians.  
The equilibrium properties of a small subsystem
%  Finally all the usual thermodynamical properties of a system obtained from the ``standard'' 
 % canonical distribution (whose origin lie in the equiprobability hypothesis of the microcanonical ensemble\cite{Kubo,LaudauLifschitz})
  are fundamentally encoded in the geometrical structure of the eigenspaces of the
  bare and dressed Hamiltonians:
  % and more precisely, in the fourth order statistics of the overlaps between eigenvectors of $\hat{H}_0$ and $\hat{H}$. 
an alternative and more general partition function can be defined 
from the statistics of the overlaps between their eigenvectors. This generalized partition function coincides with the standard canonical partition function in macroscopic conditions.
%The key ingredient of our model is the phenomenon of ``measure concentration'' appearing on functions
% defined on a large dimensionality Hilbert space: 
 The universality of the thermalisation process can be explained by the phenomenon of measure concentration. Reduced density matrices of small subsystems have the property of being \textit{self-averaging},
 %the fluctuations coming from the microscopic structures of the eigenspaces of the environment and interaction 
 %Hamiltonians cancel each other on average. 
 %This phenomenon also 
 which provides a rigorous justification for a new kind of ergodicity: time averages or ensemble averages 
over microscopic states are not required, and can be replaced by ensemble averages over interaction Hamiltonians.
% in the spirit of Wigner and Dyson use of random matrices\cite{Dyson1962a}.
%This universal pattern both captures the transient evolution of the system from its initial state or decoherence, and
%the long time limit where a stationary regime takes place. 
 %The only conditions for this universality is a sufficient coupling and a large dimensionality Hilbert space.
%All systems at sufficiently long times tend to the same universal stationary state: the thermodynamical equilibrium.
%Finally we note that entanglement, by providing a Hilbert space dimension scaling exponentially with the number
% of particles involved, makes this universality easy to appear even for relatively small particle number. 

We would like to thank D. Esteve and H. Grabert for their critical reading of the manuscript, their support and the numerous discussions, J.M. Luck for helpful discussions and C. Leruste for his help with the english.

% 433 words for the conclusion and perspective

% but these fluctuations are squeezed down as the dimension goes to infinity.

%The fourth order statistics of overlaps between eigenvectors of the uncoupled and coupled Hamiltonian is the key.

\bibliography{Thermalisation}

%%%%%%%%%%%%%%%%%%%%%%%%%%%%%%%%%%%%%%%%%%%%%%%
%    Figures
%%%%%%%%%%%%%%%%%%%%%%%%%%%%%%%%%%%%%%%%%%%%%%%

\begin{comment}
\begin{figure}
%\includegraphics[width=0.8\textwidth]
\includegraphics{Thermalisation_v1-002}
\caption{\textbf{Time evolution of the reduced density matrix of a two level system $\{|0\ran,|1\ran \}$ coupled to a quantum environment.} 
%We consider a two level system $|0\ran, |1\ran$ coupled to a quantum environment.
 Given an initial state $|1\ran \lan 1 | \otimes |\epsilon_e \ran \lan \epsilon_e |$, we numerically integrate the Schr\"odinger equation and calculate the reduced density matrix of $S$: $\rho_s$. The diagonal elements
of $\rho_s$: $P_0$ and $P_1$ are plotted as a function of time. 
After a transient regime (characterized in another publication), a stationary regime 
takes place characterized by a ``speckle'' pattern. 
This pattern is a signature of the microscopic structure of the interaction
 Hamiltonian and decreases in amplitude when the dimension of the total 
 Hilbert space increases as expected (black dots: $\dim \mathcal{H}=500$ and blue dots: $\dim \mathcal{H}=4000$).
The solid line is provided by the Formula.\eqref{ProbaS2} and is a ratio of densities of states.
 }
\label{OverlapDiagFig1}
\end{figure}
\end{comment}

\newpage

%%%%%%%%%%%%%%%%%%%%
% Methods section

%%%%%%%%%%%%%%%%%%%%%%%%%%%%%%%%%%%%%%%%%%%%%%%%%%%%%%%%%%%%%%%%%
%                 Methods section
%%%%%%%%%%%%%%%%%%%%%%%%%%%%%%%%%%%%%%%%%%%%%%%%

 \section{Methods}
 % \subsection{Random Matrix Theory}
 \subsection{Random Matrix Theory}
 %Before explaining how we calculate these moments, 
 \la{RandomMat}
 The theory of random matrices was initiated in the 1930's by the statistician Wishart
 %~\cite{Wishart}
 and further developed  by several physicists and mathematicians, including Wigner, Dyson~\cite{dyson_statistical_1962} and
  Mehta~\cite{Mehta}  who where, among other subjects,  studying the 
high energy spectra of medium and large weight nuclei (see for instance the review in~\cite{Guhr1998189}). 
% Let us consider the motivations of the first users of 
%random matrices in a physical context: Wigner and Dyson, who were in the 1950's and 1960's  studying (among other subjects) the
% high energy neutron scattering cross-sections of medium and large weight nuclei (see the review in~\cite{Guhr1998189}). 
%Such spectra could be measured at the time with sufficient accuracy and over a sufficient range to allow the study of its statistical  properties.
Wigner noticed that one cannot infer the Hamiltonian of such a complex $N$-body system from the experimental data. However, since the cross sections could be measured with sufficient energy accuracy
and over a sufficient wide energy range, some statistical study was possible. This led him to make the following drastic change of point of view. He assumes some statistical hypotheses on the Hamiltonian, which are compatible with the general symmetry properties of the system associated to the integrals of motion (energy, spin, charge).
The candidate Hamiltonian for a nucleus can then be written as a block-diagonal matrix where each block corresponds to given values of the conserved quantities (also usually called ``good" quantum numbers).
Then he assumes that the entries of each block of the Hamiltonian are independent identically distributed random variables with variance and mean depending on the conserved quantities associated 
with the block considered.
This allows him to compute averages 
%over the set of Hamiltonians considered
, get \textit{typical} properties for the set of Hamiltonians considered and in particular the average behaviour of energy levels which is of prime importance for nuclear reactions and can be compared to experimental data.
%The idea is to make statistical hypothesis which are compatible with the symmetries of the problem.
It is interesting to compare Wigner's method with the assumptions made in statistical physics: 
in this framework, the observer renounces the full knowledge of the microscopic state, assuming  the only knowledge that a global constraint is realized (e.g. the energy, particle number, etc... are set to some value), and then, taking a bayesian point of view, assumes that all micro-states compatible with the constraint are equiprobable.
Some thermodynamical quantities can then be calculated and the properties of the canonical state can be established, by considering an ensemble of identically prepared systems all governed by the same Hamiltonian but differing in initial conditions (Gibbs ensemble) and averaging over the ensemble. Then by assuming an ergodic hypothesis, these theoretical predictions can be compared with experimental values which are time averaged quantities of a \textit{single} system.
 As noticed by Metha in \cite{Mehta}, the approach of Wigner is even more radical: there is
a subjective lack of knowledge not only on the state of the system, but also on the \textit{nature} of the system itself. Wigner considers ensembles
of systems driven by different microscopic (at the level of nucleon-nucleon interaction) Hamiltonians which have the same macroscopic (at the level of the whole nucleus) properties.
He thus focuses on typical properties coming from the underlying symmetries associated with the
conserved quantities of the evolution. Dyson justifies this point of view as follows~\cite{Dyson1962a}: ``We picture a complex nucleus as a black box in which a large number of particles are interacting according to unknown laws. As in orthodox statistical mechanics we shall consider an ensemble of Hamiltonians, each of which could describe a different nucleus. There is a strong logical expectation, though no rigorous proof, that an ensemble average will correctly describe the behaviour of one particular system which is under observation". 
The point of view we adopt in this paper is analogous to Wigner and Dyson's, but in our case we can justify the ``ensemble averaging" over the interaction Hamiltonian set by the phenomenon of measure
concentration. This phenomenon provides a rigorous explanation for this new kind of ergodicity.

\subsection{Universality and the phenomenon of measure concentration}
\la{method:conc_of_measure}
The aim of this section is to describe a simple but rather non trivial behavior exhibited by ``reasonable'' real valued functions defined on large dimension spaces: the phenomenon of measure concentration, envisioned in the early work of P. L\'{e}vy but formally 
established by V. Milman and M. Gromov in the $1970$'s and the $1980$'s (see the celebrated paper by Talagrand~\cite{MR1387624}).
The reduced density matrix of the subsystem $S$ at any time $\tau$, $\varrho_s(\tau)$, considered as a function of the interaction $W$, 
exhibits such a behavior.
 Indeed we will see that $\varrho_s(t)$ depends only on some macroscopic quantities summarizing the properties of $W$
% (its symmetry class, the variance of its spectrum,..),
and does not depend on the microscopic structure of the eigenspaces of $W$. This phenomenon has two main consequences: it explains the universality of the thermalisation process and it allows to perform analytical calculations with full generality by averaging over
randomness.

\subsubsection{Central Limit Theorem and the Poincar\'{e} inequality}
Every physicist is familiar with the phenomenon of measure concentration in the simplest particular cases described by Central Limit Theorems. Considering for instance an
 experiment whose measurement output is the subject of a random error, then averaging the outcomes of
 $N$ measurements taken in stationary conditions will increase the signal 
 over noise ratio typically by a factor $\sqrt{N}$. This translates mathematically by: having a family of independent identically distributed random variables 
 $\{X_1,..., X_N \}$ with mean $\lan X\ran $ and finite variance $\sigma_X^2$, then the 
 empirical average $f(X_1,...,X_N)=1/N \sum_k X_k$
is a random variable of mean $\lan X \ran$ and standard deviation $\sigma_X/\sqrt{N}$.  
Interestingly, one can notice that if the $X_k$ variables were not independent but perfectly correlated 
(if $X_1=X_2=...=X_N$ for instance) then the standard deviation of $f$ would be $\sigma_f=\sigma_X$.
% Calling this case the ``worst'' case scenario, 
We see that ``switching off'' the correlations between the $X_k$'s has the important effect of decreasing $\sigma_f$ by a factor $1/\sqrt{N}$.
 
 Let us now consider a slightly more complicated function: the weighted average 
$h(X)= \lan \alpha | X \ran = \sum_k \alpha_k X_k $ where $\alpha=\{\alpha_1,...,\alpha_N\}$  is a deterministic family with $\alpha_i \geq 0$ .
We have $\lan h(X) \ran = ||\alpha ||_{l_1} \lan X \ran$ with the $l_1$ norm $||\alpha||_{l_1}=\sum_k |\alpha_k| $ 
and $\sigma_h^2 = ||\alpha||^2_{l_2} \sigma_X^2$, where
 $||\alpha||^2_{l_2}= \sum_k \alpha_k^2 = || \nabla h||^2_{l_2}$.
The squeezing of the relative fluctuations, if any, goes like  $||\alpha||^2_{l_2} / ||\alpha||_{l_1} $. Clearly here, a condition on the \textit{delocalization} of $\alpha$ is
 necessary for the concentration to take place. Indeed if $\alpha$ is \textit{localized} (i.e. for instance if $\alpha_1=1$ and  $\alpha_k=0$  $\forall k >1$) then
  $||\alpha||_{l_1} \approx ||\alpha||_{l_2}$ and there is obviously no concentration.
On the other side, if $\alpha$ is fully delocalized (i.e. $\alpha_k=1/N$ $\forall k$) then
$h=f$ is concentrated and its relative fluctuations are squeezed like $1/\sqrt{N}$. This delocalization on $\alpha$ sets a
 natural condition for concentration: $h$ should not depend too much on any specific 
$X_k$  in particular in the family $X$, in order to get a sum of incoherent fluctuations and the associated diffusive behavior.
% providing the $1/N$ improvement factor on the variance. 
We will see in the following that the mean of the gradient square of $f$ provides a natural way to check this delocalisation
condition.
 
 The next step is obviously to consider more complicated functions than a linear map like $h$. 
 The natural questions are then: what can we say about the statistics of the output of the function $f(X)$ knowing 
the statistics of its input $X$? Is there any king of generalized version of Central Limit Theorems holding for non 
"pathological" functions and probability measures? Surprisingly, there are simple answers to these questions relying on very
 general assumptions on the function $f$ and the probability distribution $P$ of the input. 
The first of them being  the Poincar\'{e} inequality: a probability measure $P$ defined on a domain $\Omega \subset \mathbb{R}^N$
 is said to verify a Poincar\'e inequality if there 
exists a constant $C>0$ such that for all  function $f: \Omega \subset \mathbb{R}^N \rightarrow \mathbb{R}$ continuously differentiable,
 one has
\begin{equation}
\label{PoincareIneq} \sigma_f^2 = \E[ | f(X)  - \E[f] |^2 ] \leqslant  \frac{1}{C} \E[ || \vec{\nabla} f ||_{l_2}^2] 
\end{equation}
where
$$ ||\vec{\nabla} f ||_{l_2}^2 =   \sum_k  \left( \frac{\partial f}{ \partial X_k} \right)^2 
$$
 and where the average $\E$  is relative to $P$.
In other words, the variance of $f$ is \textit{controlled} by the typical gradient strength over a constant $C$ (the Poincar\'{e} constant) usually related to the variance of the input.
%the \textit{typical} fluctuation of $f$ is not greater than $\sigma \sqrt{\E[ ||\nabla f ||_{l_2}^2]}$.
 The typical interesting cases are when $C$ does not depend on the dimension and
  $|| \nabla f||^2 \approx 1/N$, for instance with the empirical average case considered above:
%  where the central limit theorem provides an equality case:
  with  $f(X)= \frac{1}{N} \sum_i X_i$, the CLT provides $\sigma_f^2 = \sigma^2/N $,
 since $||\nabla f||_{l_2}^2 = 1/N$ and $1/C=\sigma^2$. 
 The other interesting case is when $1/C=\sigma^2 \approx 1/N$ and $||\nabla f||$ is upper bounded by a constant independent of the dimension. An example for this later case is when $\Omega$ is the hypersphere $\mathbb{S}^{N-1}$ of
  $\mathbb{R}^N$ (i.e. $\{ X \in \mathbb{R}^N / ||X||_{l_2}=1  \}$).
  In both cases, the relative  fluctuations of $f$ around its mean value are
  "squeezed" like $1/\sqrt{N}$ which can be very small when considering input spaces for $f$ made of Hamiltonians operating on large dimension Hilbert spaces, like
   the typical ones of environments. 
  The function is said to be \textit{concentrated} around its mean which is thus a very good estimate of $f$ at any point in a subspace of high dimensionnality.
%which is a sort of generalization of the law of large numbers.
%Informally, it can be stated as: a numerical function that depends in a regular  way on many random independent variables, but not too much on any of them, is \emph{essentially constant} 

 In this article, we consider the Poincar\'e inequality to be sufficient for our purpose, however it is possible to 
 go beyond with L\'evy's Lemma (see~\cite{MR856576} p. 141) which provides exponential and Gaussian upper bounds on the statistics of the fluctuations of a function away from its mean value. 
 % which states that if $\mc{H}$ is a real Hilbert space and $f$ is a numerical function defined on the hypersphere $\mathcal{H}_R$ of $\mc{H}$ with radius $R$, 
%\st $\|\nabla f\|\le \si$,  then for any uniformly distributed random vector $X$ on this sphere, 
% \be\la{1741314h59}\p(|f(X)-\E f(X)|>\eps)\;\le\; 4\exp\lf(-\f{p\eps^2}{9\pi^3R^2\si^2}\ri),\ee
%then the fraction of the volume of $\mathcal{H}_R$ where $f$ is more than $\epsilon$ away from
%its mean value is exponentially small with the dimension of $\mathcal{H}_R$:
%\be \frac{V\left(  \{ X \in  \mathcal{H}_R \; / \; |f(X)-\E[f]_R|>\eps  \}  \right)}
%{V\left(    \mathcal{H}_R \right)} 
%\;\le\; 4\exp\lf(-\f{ \eps^2 \dim\mathcal{H}}{9\pi^3 R^2 \Vert \nabla f \Vert^2}\ri) \ees
 %where $\eps$ is any positive number and $\E[f]_R$ denotes the expectation of $f$ over $\mathcal{H}_R$.
%Under some more restrictive assumptions on the second order derivative of the function $f$ (the Hessian), it is even possible
 %to characterize fully the statistics of the fluctuations\cite{MR2449121}. 
To use the Poincar\'e inequality and show that the reduced density matrix is concentrated we need the Poincar\'e constant 
 for the various probability measures on spaces of interaction Hamiltonian we might consider and
 we need the adequate upper bound on the variance of the gradient of $\rho_s(\tau)$
 considered as a function of $W$.
 \begin{comment}
Note that this concentration phenomenon is a very powerful analytic tool for estimating functions in high dimensionnal spaces and is
 precisely the opposite of the Monte Carlo method for estimating integrals: with the Monte-Carlo method, the integral of a function over a
  subspace is approximated by a discrete average of the function over randomly sampled points of the subspace.  Here $f(x)$ is
   approximated by $\E[f]=\int f d\mu$.
The concentration of measure justifies the use of random matrix theory in nuclear physics  and Dyson's and Wigner's point of view~\cite{Dyson1962a}: ``There is a strong logical expectation, though no rigourus proof, that an ensemble average [over the set of possible
  Hamiltonians] will correctly describe the behavior of one particular system which is under observation".  This "ensemble averaging" is
   provided by L\'evy lemma for free as a \textit{self-averaging} property. One has thus a ``typicality" argument which can be rigorously
    proven.
    \end{comment}

\subsubsection{Poincar\'e constants for various probability measures on spaces of interaction Hamiltonians}
The Poincar\'e constants are well known for the following probability measures:

\begin{itemize}

\item Matrices with independent real centered Gaussian distributed entries (see for instance Section 4.4 in~\cite{MR2760897}).
The Poincar\'e constant of the Gaussian probability measure $P$ of variance $\sigma^2$ on $\mathbb{R}$ is $1/\sigma^2$. 
It has the property of tensorizing: the Poincar\'e constant for the probability measure $P^{\otimes N}$ defined on $\mathbb{R}^N$ 
is the same: $1/\sigma^2$. The complex case is similar.
As the entries of the matrices we consider have a typical variance $\sigma^2/N$ (to ensure $\tr(\hat{W}^2)=\sigma_w^2$), the Poincar\'{e} constant will scale like $N/\sigma_w^2$.
%This can be extended to the Hermitian case.
%Considering the ensemble of hermitian matrices with independent centered Gaussian entries each of variance $\sigma^2/N$, 
%we see that the Poincar\'e constant for this probability measure is $N/\sigma^2$. 

\begin{comment}
\item more generally, Wigner matrices (see Introduction in~\cite{MR2760897}) have the same tensorizing property: if the probability distribution of each independent entry of the matrix satisfies a Poincar\'{e} inequality with constant $C$, then the probability distribution of the whole matrix will satisfy a Poincar\'{e} inequality with the same constant $C$.
\end{comment}
 
\item Ensemble of matrices $\{U.D.U^\dagger \}$ where $D$ is a fixed diagonal matrix and $U$ is unitary Haar distributed random
matrix. The Poincar\'e constant is actually related to the Ricci curvature of the ensemble considered as a manifold and the variance of the spectrum of $D$ (see 
appendix F. in~\cite{MR2760897} and the results due to Gromov)
$$C \approx \frac{N}{ 2 \sigma_D^2 } $$
where $ \sigma_D^2 =  \Tr(D^2)/N$ is the variance of the spectrum of $D$ (which is assumed to be fixed).
%Does it generalize easily for the orthogonal and quaternionic group?

\end{itemize}

To summarize: in all cases, because the variance of the spectrum $\hat{W}$ is set to a fixed value independent of the dimension, the Poincar\'e constant of the probability measure typically scales like $N$. 

\subsubsection{Concentration of the reduced density matrix $\varrho_s(t)$}
The next step  is to provide an upper bound on the norm of the gradient of $\varrho_s$ which is independent of the dimension of the total
Hilbert space. Recalling the definitions
$$
\small  \varrho_s(t) = \Tr_e( | \psi \ran \lan \psi | )=  \Tr_e(U_t \varrho(0) U_t^\dagger) 
 \text{ and }  U_\tau = e^{-i\frac{t}{\hbar} (H_0+W)}
$$ we show in the 
%Sec.\ref{SupInfoConcentration} of the
 Supplementary Information that the gradient of the reduced density matrix at a fixed time considered as a function
of the interaction Hamiltonian is upper bounded:
$$|| \nabla_W \varrho_s ||^2 \leqslant 2 \frac{t^2}{\hbar^2} \dim \mathcal{H}_s.$$
Using the Poincar\'e inequality in Eq.~\eqref{PoincareIneq}, we have the upper bound on the variance of 
the fluctuations of $\varrho_s$ away from its typical behavior :
$$\sigma_\varrho^2  = \E[|| \varrho_s -\E[\varrho_s]  ||^2] \leqslant  \frac{ 4 }{\hbar^2}
  \sigma_W^2 t^2\frac{\dim \mathcal{H}_s}{\dim \mathcal{H}}$$
where the norm $||.||$ refers to the trace norm: $||A||^2= \Tr(A.A^\dagger)$.
For fixed $\dim \mathcal{H}_s, \sigma_w$ and $t$, one has $\sigma_{\varrho} \rightarrow 0$ as the total Hilbert space dimension
goes to infinity. The reduced density matrix is concentrated around a typical behavior. 
\begin{comment}
\subsection{Numerical simulations}
\begin{figure}[!t]
%\includegraphics[width=0.8\textwidth]
\includegraphics[width=9cm]{Thermalisation_v1-002}
\caption{\textbf{Time evolution of the reduced density matrix of a two level system $\{|0\ran,|1\ran \}$ coupled to a quantum environment.} 
%We consider a two level system $|0\ran, |1\ran$ coupled to a quantum environment.
 Given an initial state $|1\ran \lan 1 | \otimes |\epsilon_e \ran \lan \epsilon_e |$, we numerically integrate the Schr\"odinger equation and calculate the reduced density matrix of $S$: $\rho_s$. The diagonal elements
of $\rho_s$: $P_0$ and $P_1$ are plotted as a function of time. 
After a transient regime (characterized in another publication), a stationary regime 
takes place characterized by a ``speckle'' pattern. 
This pattern is a signature of the microscopic structure of the interaction
 Hamiltonian and decreases in amplitude when the dimension of the total 
 Hilbert space increases as expected (black dots: $\dim \mathcal{H}=500$ and blue dots: $\dim \mathcal{H}=4000$).
The solid line is provided by the Formula.\ref{ProbaS2} and is a ratio of densities of states.
 }
\label{OverlapDiagFig1}
\end{figure}
\end{comment}

\subsection{Free probability and the addition of quantum operators}

% \subsubsection{First order freeness}
%We summarize here the main concepts exposed in~\cite{BianeFreeProba} after adapting
 %to the quantum thermalisation problem.
Free probability theory provides the tools for solving a problem of crucial interest 
in the context of interacting quantum systems:
what is happening to the eigenvalues and the eigenvectors of an Hermitian matrix $\hat{H}_0$, 
when adding an extra matrix $\hat{W}$ which is \textit{not} a perturbation. 
 %in order to get a "dressed" Hamiltonian $\hat{H}_0+\hat{W}$.
% knowing only the eigenvalues of each $\hat{H}_0$ and $\hat{W}$? 
 For some specific $\hat{H}_0$ and $\hat{W}$, it is a very difficult problem to get information about the spectrum of $\hat{H}+\hat{W}$,
 % (see Horn's conjecture). The solution was only found recently and is rather involved,
 %(see~\cite{klyachko,fulton}), 
 however when the dimension of the matrices becomes large, this problem gets a remarkable \textit{probabilistic} and simple solution: the spectrum of the \textit{dressed} matrix $\hat{H}_0+\hat{W}$ is essentially the same for \textit{almost} all $\hat{H}_0$ and $\hat{W}$ each with a given spectrum.  
  The ``almost" means that a probability measure on the set of Hermitian matrices with given spectrum has to be defined. The simplest and most natural assumption to be made is that the normalized eigenvectors of, let's say $\hat{W}$,  are distributed isotropically in the eigenbasis of $\hat{H}_0$, meaning that the eigenspaces of $\hat{H}_0$ and $\hat{W}$ are in generic position with respect to each other: there is no preferred direction, $\hat{W}$ has the highest symmetry possible.
  Mathematically, this can be said by rewriting the matrix $\hat{W}$ in the eigenbasis of $\hat{H}_0$ as $\hat{W}=\hat{P}.\hat{D}.\hat{P}^\dagger$ where $\hat{D}$ is diagonal matrix containing the spectrum of $\hat{W}$ and $\hat{P}$ is unitary Haar distributed (its columns are isotropically distributed on the unit sphere of $\mathbb{C}^N$).   
In this case, the matrices $\hat{H}_0$ and $\hat{W}$ have the property of being asymptotically \textit{free} (at first order) with respect to each other (see Voiculescu~\cite{VoiculescuFreeConvol}
%\cite{Voi1,Voi2}
  for a strict mathematical definition of first order freeness), 
which is the equivalent for non commutative random variables of the
property of independence for classical (commutative) random variables.
 Under these assumptions, free probability at first order provides the remarkable following tools and results:
\begin{itemize} 
%  \item Independent large hermitian matrices are \textit{asymptotically} free ($N \to \infty$).
% Result by Voiculescu\cite{Voi1,Voi2} and Collins\cite{Collins2002}: Haar distributed unitary random matrices and constant matrices
% are asymptotically free, which imply the following:
  \item Regarding the spectrum of the dressed matrix $\hat{H}_0+\hat{W}$: the probability density of the spectrum of $\hat{H}_0+\hat{W}$ depends only on the probability density of the spectrum of each $\hat{H}_0$ and $\hat{W}$, and is their \textit{free} convolution: $\rho_{\hat{H}+\hat{W}}(\epsilon)=\rho_{\hat{H}_0}(\epsilon) \boxplus \rho_{\hat{W}}(\epsilon)$. 
  By analogy with classical cumulants of commutative random variables, an $n^{th}$ ``free'' cumulant $\hat{\kappa}_n(\hat{H})$ can
  be defined for the non commutative random operator $\hat{H}$.
The free cumulants $\hat{\kappa}_n(\hat{H})$ and their associated generating function $R_H(z)=\sum_{n>0} \hat{\kappa}_n(\hat{H}) z^{n-1}$ (also called the R-transform) have the property of ``linearizing'' the free convolution, just like the Fourier transform and the classical cumulants linearize the classical convolution between the probability distributions of two \textit{commuting} independent random variables (Voiculescu 1986, Speicher 1994). In other words, no microscopic knowledge of the detailed relation between the eigenspaces of $\hat{H}_0$ and $\hat{W}$ is required to compute the spectrum of  $\hat{H}_0+\hat{W}$, which is remarkable.
%The $\hat{\kappa}$'s and subsequently the R-transform only depend on the spectrum of the matrix considered not on its eigenvectors.  
Let us also mention that the first three free cumulants of $\hat{W}$ coincide with the \textit{classical} cumulants of the probability density of the spectrum of 
$\hat{W}$.
In the case of a centered Gaussian distributed interaction, all free cumulants vanish except the second one which is the variance of the
spectrum of $\hat{W}$: $\hat{\kappa}_2(\hat{W}) = \sigma_w^2 =\tr(\hat{W}^2)$.
  \item Regarding the eigenvectors of $\hat{H}=\hat{H}_0+\hat{W}$. First, we notice that the product of the overlap coefficients
   $\lan \phi_n | \psi_i\ran \lan \psi_i | \phi_m \ran$ is (up to a factor $2i\pi$) the residue  of the Green function or matrix element 
   $G_{n,m}(z)=\lan \phi_n | G_{\hat{H}}(z) |\phi_m \ran$ at the pole
  $\lambda_i$, since  $G_{n,m}(z) = \sum_i \lan \phi_n | \psi_i\ran \lan \psi_i | \phi_m \ran /(\lambda_i -z)$.
From this relation, one can get the second order moments of the overlaps as a function of 
the average resolvent of $\hat{H}$ and the density of states of $\hat{H}$, $\tilde{\rho}$:
\begin{equation}
\label{MomentsResolvent}
 \E[ |\lan \phi_n | \psi_i \ran |^2 ]
 = \lim_{z \rightarrow \bar{\lambda}_i +i 0^+ }- \f{1}{\pi}  \frac{\Im(\E[G_{n,n}(z)]) }{  \tilde{\rho}(\bar{\lambda}_i)},  
 \end{equation}
 where $\bar{\lambda}_i$ is the mean of the $i^{th}$ eigenvalue. Then assuming that $\hat{W}$ and $\hat{H}_0$
 are asymptotically free at first order,
 %can be written in the eigenbasis of 
 %$\hat{H}_0$ as $\hat{W}=\hat{P} \hat{D} \hat{P}^\dagger$ with $\hat{D}$
  Kargin established in~\cite{KarginSubordination} a \textit{subordination} relation between the average of the dressed
  resolvent $\E[G_{\hat{H}}(z)]$ and the bare
  resolvent $G_{\hat{H}_0}(z)$:
\begin{equation}
\label{AverageResolvent}
\E[G_{\hat{H}_0+\hat{W}}(z)]=G_{\hat{H}_0}(z+S_{\hat{W}}(z))  
\end{equation}
where $S_W(z)$, called the ``subordinate'' function, is related to the R-transform of $\hat{W}$ and the Stieltjes transform of $\hat{H}$, $m_H(z)=\tr(G_H(z))$:
$$S_W(z) = R_W(m_H(z)).$$
Finally, combining Eq.\eqref{MomentsResolvent} and \eqref{AverageResolvent}, we get
\begin{eqnarray}
\label{Moments2}
 \E[ |\lan \phi_n | \psi_i \ran |^2 ]
 %= \lim_{z \rightarrow \bar{\lambda}_i +i 0^+ }-\frac{\Im(\E[G_{n,n}(z)]) }{ \pi  \tilde{\rho}(\bar{\lambda}_i)}   \nonumber 
\approx  \f{1}{ \tilde{\rho}(\lambda_i)} \f{1}{\pi} \f{s_{\lambda_i}}{(\epsilon_n-\lambda_i
-\tilde{s}_{\lambda_i})^2+s_{\lambda_i}^2 } 
\end{eqnarray}
%where $G^R(\lambda)=\lim_{z \to \lambda+i0^+} G(z)$ is the retarded resolvent, 
where $s_\lambda$ (resp. $\tilde{s}_\lambda$) is the imaginary part (resp. the real part) of the subordinate function, which
defines an effective linewidth  (resp. an effective energy shift) of the bare eigenstate $|\phi_n \ran$
due to the introduction of the additive term $\hat{W}$.
Please note that this result is non perturbative.
% (or ``free'' cumulant expansion):  $S(z)=\sum_{n >0} \hat{\kappa}_n(W) m_H(z)^{n-1}$ 
To avoid unnecessary complications for understanding the process of thermalisation, in the main part of the paper we truncate the free cumulant expansion of $S$ at first order: $S(z)\approx \sigma_w^2 m_H(z)$ and
recover the case of a  gaussian interaction already calculated in \cite{allez_eigenvectors_2013}:
$s_\lambda=  \sigma_w^2  \rho(\lambda)$ and $\tilde{s}_\lambda=  \sigma_w^2 H_\rho(\lambda) $
where $H_\rho(\lambda)$ is the Hilbert transform of $\rho$.
This later energy shift  due to the Hilbert transform of $\rho$ is typically of the order of one linewidth, and
thus negligable on the scale of variation of the density of states.
%Since $\rho(\lambda) \geq 0$ and we assume
%The second order moments $\E[ |\lan \phi_m | \psi_i \ran |^2 ]$ have been calculated recently by Allez and Bouchaud \cite{allez_eigenvectors_2013} in the case of a interaction 
%belonging to the gaussian orthogonal ensemble (i.e. the matrix entries $W_{i,j}$ are independent, real gaussian random variables with 
%zero mean and the same given variance). In ref. we provide an extension 
To summarize, this subordination property of the resolvent provides a very deep link between the statistics of the eigenvectors of $\hat{H}$ (here the second order moments of the overlaps  $\E[|\lan \phi_n | \psi_i \ran|^2]$)
  and the statistics of the spectra of both $W$ (through its free cumulant expansion) and $\hat{H}$ (through its Stieltjes transform). This has strong implications on the delocalization of the bare eigenvectors (the $ \{ |\phi_nÊ\ran \}_n $) in the dressed eigenbasis 
  (the $ \{ |\psi_iÊ\ran \}_i $) which controls the thermalisation process.
  \end{itemize}
%  
 % Finally, let us mention that first order free probability has a very deep connection with combinatorics (give refs).
 % \end{comment}

  %%%%%%%%%%%%%%%%%%%%
  %%%   End Methods
  %%%%%%%%%%%%%%%%%%%%

\begin{comment}
\subsection{Numerical simulations}
In this paper, we consider numerically $\hat{H}_0$ a diagonal matrix 
 the fourth order moments of the overlaps between 
a 
an Hamiltonian $\hat{H}_0$ with  Gaussian distributed spectrum on a Hilbert space of dimension $N=500$. The perturbation $\hat{W}$ has independent identically distributed Gaussian entries with standard deviation $\sigma$. 

We first generate a diagonal matrix for $\hat{H}_0$, and
then generate $Nit=1000$ iterations the perturbation
For each generated $\hat{W}$ we diagonalize numerically $\hat{H}_0+\hat{W}$
calculate the overlap matrix which is
\end{comment}

\pagebreak

\begin{widetext}

\begin{center}
%Supplementary information for:  \\ ``Mapping the optimal route between two quantum states''
{\bf SUPPLEMENTARY INFORMATION FOR: \\``THERMALISATION OF A QUANTUM SYSTEM FROM FIRST PRINCIPLES''}
\end{center}

\section{Integral representation of the total density matrix}
\label{IntRep}
Considering the resolvent operator associated to the dressed Hamiltonian $\hat{H}$: $G_H(z)=(\hat{H}-z\one)^{-1}$ defined for
$z \in \mathbb{C}\setminus \mathbb{R}$, the quantum evolution operator $U_t=e^{-i\hat{H}t}$, and subsequently the 
total density matrix, can be rewritten as complex integrals:

\begin{equation}
 \label{UEq}
U_t  = \frac{1}{2i\pi} \oint_C dz G_H(z) e^{-izt}   \quad \text{giving} \quad
 \varrho(t)= U_t \varrho(0) U_t^\dagger = -\frac{1}{4\pi^2} \oint_{C} dz \oint_{C'} dz'  G_H(z)
 \varrho(0) G_H(z') e^{- i(z-z')t} 
 \end{equation}
where $C$ and $C'$ are counter-clockwise \textit{macroscopic} contours circling once around all eigenvalues of $\hat{H}$. 
Decomposing the initial density matrix on the bare basis $\{| \phi_n \ran \}_n$ and averaging over the statistics of $W$, we are lead  to consider the
 correlations between matrix elements of the resolvent (i.e. Green functions): $\E[G_{n,m}(z_1)G_{p,q}(z_2)]$ where
  $G_{n,m}(z)=\lan \phi_n | G_H(z) | \phi_mÊ\ran$ and $\E$ is the average over the statistics of $\hat{W}$.
  %What is remarkable is that the integral does not depend on the choice of the contours provided they are macroscopic (they circle around all poles)

%%%%%%%%%%%%%%%%%%%%%%%%%%%%%%%%%%%%%%%%%%%%%%
%    Supp. Info.
%%%%%%%%%%%%%%%%%%%%%%%%%%%%%%%%%%%%%%%%%%%%%%
%\newpage

%\section{Supplementary information}

\section{Zero cases for the first and second order statistics of the resolvent entries}

The aim of this section is to prove that the only non zero correlations between matrix elements of the resolvent operator 
$\E[G_{n,m}(z_1)G_{p,q}(z_2)]$ are when ($n=q$ and $m=p$) or ($n=m$ and $p=q$). 
The first case will contribute to both the transient and the stationary regimes of the \textit{diagonal} terms of the total density matrix, whereas  the last case ($n=m$ and $p=q$) provides both regimes of the extra diagonal terms (or coherences) of $\varrho(t)$. 
In the main part of the paper, we focus on the first case. We provide at the end of this section an explanation why the second case can be neglected.

To proceed, we first remind the results from
 Kargin~\cite{KarginSubordination} on the average of the resolvent operator $G_{\hat{H}}(z)=\frac{1}{\hat{H}- z \one}$, defined for 
 $z\in \mathbb{C}\backslash \mathbb{R}$, where $\hat{H}=\hat{H}_0+\hat{P}.\hat{D}.\hat{P}^\dagger$ with $\hat{H}_0$ a $N \times N$
  hermitian deterministic matrix,  $\hat{P}$ unitary Haar distributed and $\hat{D}$ real deterministic diagonal. The main result being that this average
   is actually diagonal in the eigenbasis of $\hat{H}_0$. Then we adapt Kargin's method to get our result on the second order statistics (the correlations between resolvent entries). 
%Let  for , $\hat{H}_0$ 
%an Hermitian matrix  and $\hat{D}$ a deterministic Hermitian matrix and $ \hat{P}$ a Haar-distributed unitary $N\times N$ matrix. 
In the following lemma and its proof, we shall write the matrices in an eigenbasis 
$\{|\phi_1 \rangle,\ldots, |\phi_N \rangle \}$ of $\hat{H}_0$, so that we can suppose that 
$\hat{H}_0$ is a diagonal matrix. Besides, by invariance of the Haar measure (i.e. by the fact that for 
any fixed unitary matrix $\hat{O}$, $\hat{P}\hat{O}$ is also distributed according to the Haar measure), one can suppose that $\hat{D}$ is diagonal. 

\begin{lemma}Let $z, z_1,z_2\in \mathbb{C}\backslash \mathbb{R}$. We have \begin{itemize}
\item[(a)] The matrix $\mathbb{E}[G_{\hat{H}}(z)]$  is diagonal.\\
\item[(b)] The matrix    $\mathbb{E}[G_{\hat{H}} (z_1) \hat{M}  G_{\hat{H}}(z_2)] $ is diagonal for any diagonal matrix $\hat{M}$.\\
\item[(c)] For any $i\ne j$,  for $\hat{M}_{i,j}$ the matrix whose sole non-zero entry is the $(i,j)$-th one, the only non-zero entry of the matrix $\mathbb{E}[G_{\hat{H}} (z_1) \hat{M}_{i,j} G_{\hat{H}} (z_2)] $ is the $(i,j)$-th one.
\end{itemize}
\end{lemma}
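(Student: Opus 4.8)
The plan is to reduce all three statements to a single symmetry: the law of $\hat H = \hat H_0 + \hat P\hat D\hat P^\dagger$ is invariant under conjugation by any diagonal unitary $\Theta=\diag(e^{i\theta_1},\dots,e^{i\theta_N})$. This holds because $\Theta\hat H_0\Theta^\dagger=\hat H_0$ (here $\hat H_0$ is diagonal in the working basis) while $\Theta\hat P$ is again Haar distributed, so $\Theta\hat H\Theta^\dagger\eqlaw\hat H$; combined with the elementary identity $G_{\Theta\hat H\Theta^\dagger}(z)=\Theta\, G_{\hat H}(z)\,\Theta^\dagger$ (valid for $z\notin\mathbb R$, where every resolvent is bounded in norm by $1/|\Im z|$, so all the expectations below make sense), this will pin down the structure of the averaged matrices.

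For (a), set $A:=\mathbb E[G_{\hat H}(z)]$; the invariance gives $A=\mathbb E[G_{\Theta\hat H\Theta^\dagger}(z)]=\Theta A\Theta^\dagger$ for every phase vector $\theta$, i.e. $A_{k\ell}=e^{i(\theta_k-\theta_\ell)}A_{k\ell}$, and choosing $\theta_k\ne\theta_\ell$ forces $A_{k\ell}=0$ for $k\ne\ell$. For (b), a diagonal $\hat M$ satisfies $\Theta^\dagger\hat M\Theta=\hat M$, so the same substitution yields $\mathbb E[G_{\hat H}(z_1)\hat M G_{\hat H}(z_2)]=\Theta\,\mathbb E[G_{\hat H}(z_1)\hat M G_{\hat H}(z_2)]\,\Theta^\dagger$, and the argument of (a) applies verbatim.

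For (c), the one step needing care is the transformation of $\hat M_{i,j}$: a direct computation gives $\Theta^\dagger\hat M_{i,j}\Theta=e^{i(\theta_j-\theta_i)}\hat M_{i,j}$, so $C:=\mathbb E[G_{\hat H}(z_1)\hat M_{i,j}G_{\hat H}(z_2)]$ satisfies $C=e^{i(\theta_j-\theta_i)}\Theta C\Theta^\dagger$, i.e. $C_{k\ell}=e^{i(\theta_j-\theta_i+\theta_k-\theta_\ell)}C_{k\ell}$ for all $\theta$. Hence $C_{k\ell}=0$ unless the linear form $\theta_j-\theta_i+\theta_k-\theta_\ell$ vanishes identically in $(\theta_1,\dots,\theta_N)$, which — since $i\ne j$ — leaves only $k=i,\ \ell=j$, so the $(i,j)$-th entry is the only one that can be nonzero. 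I expect no real obstacle here; the only things to get right are the final combinatorial bookkeeping (that $\{j,k\}=\{i,\ell\}$ as multisets together with $i\ne j$ forces $k=i$ and $\ell=j$) and the observation that it is the off-diagonal location of $\hat M_{i,j}$ that makes the phase $e^{i(\theta_j-\theta_i)}$ nontrivial — this is precisely why the surviving entry is $(i,j)$ rather than the diagonal. For the sharper quantitative versions used afterwards (the subordination relation of Kargin, Eq.~\eqref{AverageResolvent}), one would combine this diagonality with his explicit resolvent computation, but the qualitative claims of the lemma follow from the phase symmetry alone.
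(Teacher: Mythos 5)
Your argument is correct, and it reaches the lemma by a genuinely more direct route than the paper. The paper's proof expands the resolvent as a Neumann series $(\hat H_0+\hat P\hat D\hat P^\dagger-z\one)^{-1}=-\sum_{k\ge0}(\hat H_0+\hat P\hat D\hat P^\dagger)^k/z^{k+1}$ for $|z|$ large, invokes analytic continuation, and reduces the three claims to the vanishing of mixed moments of Haar-unitary entries (statements (a'')--(c'') in the Supplementary Information), which are then killed by the invariance of the Haar measure under multiplication by diagonal phase matrices. You exploit the very same invariance, but one level higher: $\Theta\hat P$ Haar-distributed and $\Theta\hat H_0\Theta^\dagger=\hat H_0$ give $\Theta\hat H\Theta^\dagger\eqlaw\hat H$, and the covariance $G_{\Theta\hat H\Theta^\dagger}(z)=\Theta G_{\hat H}(z)\Theta^\dagger$ turns this into the entrywise phase constraints $A_{k\ell}=e^{i(\theta_k-\theta_\ell)}A_{k\ell}$ and $C_{k\ell}=e^{i(\theta_j-\theta_i+\theta_k-\theta_\ell)}C_{k\ell}$, valid for every $z,z_1,z_2\notin\R$ at once. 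This buys you a cleaner proof: no restriction to large $|z|$, no analytic continuation, no index-counting combinatorics, and in fact no need for $\hat D$ to be diagonal; your multiset bookkeeping $\{j,k\}=\{i,\ell\}$ with $i\ne j$ forcing $k=i$, $\ell=j$ is exactly right, as is the remark that boundedness $\|G(z)\|\le 1/|\Im z|$ legitimises the expectations. What the paper's series route buys in exchange is the explicit intermediate statements about moments of Haar entries and about words $A_1\hat P A_2\hat P^\dagger\cdots$, which apply verbatim to more general polynomial functionals of $\hat H$ and sit closer to the moment-based machinery (freeness, subordination) used in the rest of the paper; but for the lemma as stated, your symmetry argument is complete and self-contained.
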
 

\begin{proof}
Let $p_{ij}$ denote the entries of the unitary matrix $P$ and set $q_{ij}:=p_{ji}^*$ (here, ${}^*$ stands for the 
complex conjugate). Note that in each proposition, by analytic continuation, it suffices to focus on large enough $z,z_1,z_2$. 
In this case, using the formula 
$$ \frac{1}{\hat{H}_0+\hat{P}.\hat{D}.\hat{P}^\dagger-z \one}=- \sum_{k\ge 0}\frac{(\hat{H}_0+\hat{P}.\hat{D}.\hat{P}^\dagger)^k}{z^{k+1}}$$(which is true as soon as $|z|> \|\hat{H}_0\|+\|\hat{D}\|$), the expansion of $\hat{H}_0+\hat{P}.\hat{D}.\hat{P}^\dagger$ and the fact that both $\hat{H}_0$ and $\hat{D}$ are diagonal, the propositions of the lemma reduce  to  the following ones: 
  \begin{itemize}
\item[(a')] For any $p\ge 0$, any $A_1, \ldots, A_{2p}$ diagonal matrices, the matrix    $$ \mathbb{E}[A_1\hat{P}A_2\hat{P}^\dagger \cdots A_{2p-1}\hat{P}A_{2p}\hat{P}^\dagger]$$ is diagonal.
\item[(b')] Same as (a').  
\item[(c')] For any $p,q\ge 0$, any $A_1, \ldots, A_{2p}$, $B_1, \ldots, B_{2q}$ diagonal matrices and any     matrix $M_{i,j}$ whose sole non-zero entry if the $(i,j)$-th one (with $i\ne j$),  the only non-zero entry of the matrix
\small
 $$ \mathbb{E}[A_1\hat{P}A_2\hat{P}^\dagger \cdots A_{2p-1}\hat{P}A_{2p}\hat{P}^\dagger M_{ij}B_1\hat{P}B_2\hat{P}^\dagger \cdots B_{2q-1}\hat{P}B_{2q}\hat{P}^\dagger]$$
 \normalsize is the $(i,j)$-th one.
\end{itemize}
Expanding the matrix products, (a')  and (c') then reduce to
  \begin{itemize}
\item[(a'')] For any $k\ge 1$, any $i_1, \ldots, i_{k+1}$, $j_1, \ldots, j_k$, we have   $$i_1\ne i_{k+1}\implies \mathbb{E}[p_{i_1j_1}q_{j_1i_2} p_{i_2j_2}q_{j_2i_3}\cdots\cdots p_{i_kj_k}q_{j_ki_{k+1}}]=0.$$ 
\item[(c'')] 
For any $k,l\ge 1$, any $i_1, \ldots, i_{k+1}$, $j_1, \ldots, j_k$, any $a_1, \ldots, a_{l+1}$, $b_1, \ldots, b_l$, such that  $i_{k+1}\neq a_1$, we have   
\begin{eqnarray} \nonumber \mathbb{E}[p_{i_1j_1}q_{j_1i_2}  \cdots p_{i_kj_k}q_{j_ki_{k+1}}\times p_{a_1b_1}q_{b_1a_2}  \cdots p_{a_l b_l}q_{b_l a_{l+1}}]\ne 0 \\ \implies i_1=i_{k+1}\textrm{ and }a_1=a_{l+1}. \nonumber
\end{eqnarray}
\end{itemize}

It is easy to see that (c'') reduces to (a''). 
So let us prove (a''). By invariance of the Haar measure by left and right multiplication by diagonal matrices with diagonal entries on the unit circle of $\mathbb{C}$, we know that for the above expectation to be non zero, the number of times an index $i$ appears as first coordinate of a $p_{kl}$ term to be equal to the number of times it appears as the second coordinate of a $q_{kl}$ term. This constraint cannot be satisfied if $i_1\ne i_{p+1}$. 
\end{proof}

\subsection{Stationary regime of the extra diagonal terms of $\varrho(t)$: case ($n=m$ and $p=q$).}
In the long time limit, the matrix element $\E[ \lan \phi_n | U_t | \phi_n \ran  \lan \phi_p | U_t^\dagger | \phi_p \ran]$
is equal to the same diagonal sum of residues as considered in the main body of the text: $\sum_i \E[|\lan \phi_n | \psi_i \ran|^2 |\lan \phi_p | \psi_i \ran |^2]$, however here the partial trace operation 
imposes a strong constraint on $| \epsilon_n -\epsilon_p |$ to get a non zero contribution to the extra diagonal matrix element of the reduced density matrix of $S$: $| \epsilon_n -\epsilon_p |$ is always larger that $ D_s $ the minimum level spacing of the system $S$ alone. 
As a consequence, the extra diagonal terms of the reduced density matrix of $S$ in the long time limit (i.e. $\lim_{t \to \infty} 
\lan \epsilon_s | \varrho_s(t) | \epsilon_{s'} \ran$ for $s\neq s'$)  will always be 
smaller than their diagonal counterparts (i.e. $\lim_{t \to \infty} \lan \epsilon_s | \varrho_s(t) | \epsilon_{s} \ran$)
by a factor equal to $\pi^2 \sigma_w^4 \rho^2$ (the FWHM of the $\bar{p}_{m \to n}$ curve ) over the $D_s$.
Under our assumptions, this factor is very small. 
%making the coefficient of an extra diagonal term $|\phi_n \ran \lan \phi_p | $ negligible compared to its diagonal counterparts
% $|\phi_n \ran \lan \phi_n |$ or $| \phi_p \ran \lan \phi_p |$  in the long time limit. 

%%%%%%%%%%

\section{Concentration of the reduced density matrix of the subsystem $S$}

\label{SupInfoConcentration}

In this section, our aim is to provide an upper bound on the norm of the gradient of $\rho_s$ with respect to $W$, which is uniform in the dimension. This upper bond is used in the Methods section to show that the reduced density matrix of $S$ is concentrated around a typical behavior.
%show that, in the large dimensionality limit, the reduced density matrix of $S$: $\Tr_e |\psi \ran \lan \psi |$ is concentrated around a typical value (the average with respect to the interaction $W$). To proceed we will first
% then we will consider the Poincar\'{e} constants of various probability measures on ensembles of interaction Hamiltonians.
%\subsubsection{Upper bound on the gradient of $\rho_s$ as a function of the interaction Hamiltonian}

Recalling the definitions
$$  | \psi \ran \lan \psi | = U_\tau \rho(0) U_\tau^\dagger
\qquad \text{and} \qquad U_\tau = e^{-i\tau (\hat{H}_0+\hat{W})},$$

we use the formula for the differential of the exponential map, in order to get the differential of $|\psi \ran \lan \psi |$ with respect to the interaction $\hat{W}$:

$$
\textbf{d} |\psi \ran \lan \psi | (\delta W) =  \int_0^1  h \circ g_\alpha (\delta W) d\alpha   
$$
with 
 $$ h(A)= -i \tau  [A , |\psi\ran \lan \psi | ]  \qquad \text{and} \qquad   g_\alpha(B)= U_{\alpha \tau} \; B \; U_{\alpha \tau }^\dagger \label{Comm} 
$$

where $[,]$ denotes the commutator and $\circ$ the composition of functions.
We start by the upper bond
$$ \small 
|| \nabla_W \rho_s ||^2  =\left\Vert  \int_0^1  \Tr_e h \circ g_\alpha(\cdot) d\alpha \right\Vert^2  \leqslant  \int_0^1 || \Tr_e h \circ g_\alpha(\cdot)||^2 d\alpha $$

where the notation $||\cdot ||$ is for the norm defined on the ensemble of linear applications from the space of interaction Hamiltonians to the space of density matrices on $\mathcal{H}_s$:

$$||f||^2= \sum_{i,j} ||f(M_{i,j})||_F^2 \quad \text{where} \quad ||A||_F^2=\Tr(A.A^\dagger),$$
and
 $$  f : W \in H_{\dim \mathcal{H} \times \dim \mathcal{H}}(\mathbb{C}) \rightarrow  \rho_s \in H_{\dim \mathcal{H}_s \times \dim \mathcal{H}_s}(\mathbb{C}),$$
$H_{n,n}(\mathbb{C})$ is the ensemble of hermitian matrices of size $n \times n$
and  $M_{i,j}$ is the matrix with zero everywhere except at the intersection of the $i^{th}$ line and $j^{th}$ column.
Then we have the equality:

$$  \int_0^1 || \Tr_e h \circ g_\alpha(\cdot)||^2 d\alpha =  ||\Tr_e h(\cdot)||^2$$

since $g_\alpha$ is a unitary change of orthonormal basis.  To move on and make the partial trace easy, we write $|\psi \ran $ in the tensor basis $|\psi \ran= \sum_{s,e} \gamma_{s,e} \; |\epsilon_s \ran \otimes |\epsilon_e \ran$ where
$\{ \gamma_{s,e} \}_{s,e}$ is a $\dim \mathcal{H}_s \times \dim \mathcal{H}_e $ matrix.
Then we have for the matrix elements of $\Tr_e h(M_{a,b,c,d})$ where $M_{a,b,c,d}= |s_a \ran \lan  s_b| \otimes  | e_c \ran \lan e_d |$:

$$ \lan \epsilon_s| \Tr_e [M_{a,b,c,d}, | \psi \ran \lan \psi|] |\epsilon_{s'} \ran =  \gamma_{b,d} \; \gamma_{s',c}^* \;  \delta_{s,a}  - \gamma_{s,d} \; \gamma_{a,c}^* \;   \delta_{b,s'} $$

Taking the square modulus and summing over $a,b,c,d$:

\begin{eqnarray}
\small \nonumber
\sum_{a,b,c,d}  | \lan \epsilon_s | \Tr_e ( [ M_{a,b,c,d},|\psi \ran \lan \psi | ] ) | \epsilon_{s'} \ran |^2 =
%   \nonumber
 \sum_{b\neq s',c,d} | \gamma_{b,d}  |^2 | \gamma_{s',c}|^2 \nonumber \\
  + \sum_{a\neq s,c,d} | \gamma_{s,d}  |^2 |\gamma_{a,c}|^2  \nonumber 
 + \sum_{c,d}  | \gamma_{s',d} \gamma_{s',c} - \gamma_{s,d} \gamma_{s,c} |^2 \nonumber  \\
 = 
   \left(  \sum_{b,c,d} | \gamma_{b,d}  |^2 | \gamma_{s',c}|^2 + \sum_{a,c,d} | \gamma_{s,d}  |^2 |\gamma_{a,c}|^2 \right) \nonumber\\
  - 2  \sum_{c,d}\Re\left(  \gamma_{s',d} \; \gamma_{s',c} \; \gamma_{s,d}^* \; \gamma_{s,c}^* \right)  \nonumber
 \end{eqnarray}

then summing over $s$ and $s'$, we get the square of the norm we are looking for:

\begin{eqnarray}
\nonumber
||\Tr_e h(\cdot) ||^2 & =&  \tau^2 \sum_{s,s'} \sum_{a,b,c,d}  \left| \lan \epsilon_s | \Tr_e ( [ M_{a,b,c,d},|\psi \ran \lan \psi | ] )  | \epsilon_{s'} \ran \right|^2 
   \\ \nonumber
& = & 2 \tau^2 \dim \mathcal{H}_s -   2\tau^2 \sum_{c,d,s,s'}\Re\left(  \gamma_{s',d} \gamma_{s',c} \gamma_{s,d}^* \gamma_{s,c}^* \right)
 \end{eqnarray}
 since the normalization condition: $ \Tr(\gamma.\gamma^\dagger)  = 1$. The second term on the right hand side is:

$$\sum_{c,d,s,s'} \left(  \gamma_{s',d} \gamma_{s',c} \gamma_{s,d}^* \gamma_{s,c}^* \right)= \Tr((\gamma.\gamma^\dagger)^2) \geqslant 0$$

We finally get that 

$$|| \nabla_W \rho_s ||^2 \leqslant || \Tr_e h (\cdot)  ||^2  \leqslant 2 \tau^2 \dim \mathcal{H}_s.$$

%Note that as the (full) trace of a commutator is always zero, it is not surprising  that the partial trace might be small provided one is tracing out over a Hilbert space
% (here $\mathcal{H}_e$) of much larger dimension than the output space (here $\mathcal{H}_s$).  

\end{widetext}

\end{document}